\newtheorem{theorem}{Theorem}[section]
\newtheorem{lemma}[theorem]{Lemma}
\newtheorem{defn}[theorem]{Definition}
\newtheorem{example}[theorem]{Example}
\newtheorem{problem}[theorem]{Problem}
\newcommand{\norm}[1]{\|#1\|}
\newcommand{\D}{\partial}
\newcommand{\RR}{{\mathbb{R}}}
\newcommand{\degD}{{\deg_{\D}}}
\newcommand{\ftil}{{\widetilde f}}
\newcommand{\gtil}{{\widetilde g}}
\newcommand{\fhat}{{\widehat f}}
\newcommand{\ghat}{{\widehat g}}
\newcommand{\calC}{{\mathcal C}}
\newcommand{\Vhat}{{\widehat V}}
\newcommand{\What}{{\widehat W}}
\newcommand{\what}{{\widehat w}}
\newcommand{\Sigmabar}{{\overline\Sigma}}
\newcommand{\lclm}{{\mbox{\upshape lclm}}}
\newcommand{\gcrd}{{\mbox{\upshape gcrd}}}
\renewcommand{\epsilon}{\varepsilon}
\DeclareMathOperator{\diag}{{\mbox{\upshape diag}}}
\DeclareMathOperator{\lnullspace}{{\mbox{\upshape null}_{\ell}}}
\DeclareMathOperator{\lcoeff}{{\mbox{\upshape lcoeff}}}
\DeclareMathOperator{\content}{{\mbox{\upshape cont}}}
\definecolor{darkgreen}{rgb}{0,.35,0}
\definecolor{darkblue}{rgb}{0,0,.5}
\definecolor{darkred}{rgb}{.6,0,0}
\newcommand{\papertitle}{{Computing GCRDs of Approximate Differential Polynomials}}
\begin{document}

\title{\papertitle}

 \author{
 Mark Giesbrecht and Joseph Haraldson\\[2pt]
        {Cheriton School of Computer Science}\\
        {University of Waterloo}\\
        {Waterloo, Ontario}\\
        \href{mailto:mwg@uwaterloo.ca}{\{mwg,jharalds\}@uwaterloo.ca}}
\maketitle

\begin{abstract} 
  Differential (Ore) type polynomials with approximate polynomial	
  coefficients are introduced.  These provide a useful representation
  of approximate differential operators with a strong algebraic
  structure, which has been used successfully in the exact, symbolic,
  setting.  We then present an algorithm for the approximate Greatest
  Common Right Divisor (GCRD) of two approximate differential
  polynomials, which intuitively is the differential operator whose
  solutions are those common to the two inputs operators.  More
  formally, given approximate differential polynomials $f$ and $g$, we
  show how to find ``nearby'' polynomials $\ftil$ and $\gtil$ which
  have a non-trivial GCRD.  Here ``nearby'' is under a suitably
  defined norm.  The algorithm is a generalization of the SVD-based
  method of \cite{CGTW95} for the approximate GCD of regular
  polynomials.  We work on an appropriately ``linearized''
  differential Sylvester matrix, to which we apply a block SVD.  The
  algorithm has been implemented in Maple and a demonstration of its
  robustness is presented. 
\end{abstract}




 \section{Introduction}
\label{sec:intro}

The ring of differential (Ore) polynomials $\RR(t)[\D;']$ over the
real numbers $\RR$ provides a (non-commutative) polynomial ring
structure to the linear ordinary differential operators.  Differential
polynomials have found great utility in symbolic computation, as they
allow us to apply algebraic tools to the simplification and solution
of linear differential equations; see \cite{BroPet94} for a nice
introduction to the mathematical and computational aspects.  The ring of
differential polynomials $\RR(t)[\D;']$ is defined as the usual
polynomials in $\RR(t)[\D]$ (i.e., polynomials in $\D$ over the
commutative field of rational functions $\RR(t)$), under the usual
polynomial addition and the non-commutative multiplication rule
\[
\D f(t) = f(t) \D + f'(t) \text{ for } f(t) \in \RR(t),
\]
where $f'(t)$ is the usual formal derivative of $f(t)$ with respect to
$t$. This definition of $\RR(t)[\D;']$ is useful because there is a
natural action of $\RR(t)[\D;']$ on the space $\calC^\infty$ of
infinitely differentiable functions $y(t):\RR\to\RR$.  In particular,
for any $y(t)\in\calC^\infty$,
\[
f(\D) = \sum_{0\leq i\leq k} f_i(t)\D^i
~~ \mbox{acts on} ~ y(t) ~ \mbox{as} ~~
\sum_{0\leq i\leq k} f_i(t) \frac{d^i}{dt^i} y(t).
\]

A primary benefit of viewing differential operators in this way is
that they have the structure of a left (and right) Euclidean domain.
In particular, for any two polynomials $f,g \in \RR(t)[\D;']$, there is a
unique polynomial $h\in\RR(t)[\D;']$ of maximal degree in $\D$ such
that $f=uh$ and $g=vh$ for $u,v\in\RR(t)[\D;']$ (i.e., $h$ divides $f$
and $g$ exactly on the right).  This polynomial $h$ is called the GCRD of 
$f$ and $g$ and it is unique up to a scalar
multiplication by a non-zero element of $\RR(t)$ (we could make this
GCRD have leading coefficient 1, but that would introduce denominators from $\RR[t]$, as
well as potential numerical instability, as we shall see).

The important geometric interpretation is that there is an algorithm
to determine the differential polynomial whose solution spaces is the
intersection of the solution space of $f$ and $g$; this is precisely
$h=\gcrd(f,g)$.

The goal of this paper is to devise an efficient, numerically robust
algorithm to compute the GCRD when the coefficients in $\RR$ are given
approximately.  Specifically, given $f,g \in \RR(t)[\D;']$, we wish to
find $\ftil,\gtil\in\RR(t)[\D;']$, where $\ftil$ is ``near'' $f$ and
$\gtil$ is ``near'' $g$ such that $\degD \gcrd(\ftil,\gtil)\geq
1$. That is, $\ftil$ and $\gtil$ have a \emph{non-trivial} GCRD.  The
precise definition of nearness is given below.  The approach we will
take is similar to the one developed in \cite{CGTW95} for regular polynomials,
which is to reduce the problem to a singular value decomposition or
total least squares problem.

The problem of computing the GCRD in a symbolic and exact setting dates
back to \cite{Ore33}, who presents a Euclidean like algorithm.  See
\cite{BroPet94} for an elaboration of this approach.  \cite{Li97}
introduce a differential-resultant-based algorithm which makes
computation of the GCRD very efficient using modular arithmetic.  We
generalize and adapt their approach to a numerical setting here.

The analogous approximate GCD problem for approximate regular
(commutative) polynomials has been a key topic of research in
symbolic-numeric computing since its inception.  A full survey is not
possible here, but we note the deep connection between our current
work and that of \cite{CGTW95}; see also \cite{KarLak96},
\cite{SasSas97}, and \cite{ZenDay04}.  Also important to this current work
is the use of so-called structured (numerical) matrix methods for
approximate GCD, such as structured total least squares (STLS) and
structured total least norm (STLN); see \cite{BGM05} and \cite{KYZ05}.
A structured approach to relative primality is taken in
\cite{Beckerman97}.
More directly employed later in this paper is the multiple polynomial
approximate GCD method of \cite{KalYan06}.
This latter paper also provides a nice survey of recent developments.

\subsection{Differential polynomial basics}
\label{ssec:prelim}

The ring $\RR(t)[\D;']$ is a non-commutative principal left (and
right) ideal domain.  For $f,g \in R(t)[\D;']$, with $\degD f=n$ and
$\degD g=m$, we have the following properties \citep{Ore33}.

\begin{enumerate}
\item $\degD (fg) = \degD f + \degD g$ , \newline
  $\degD(f+g) \leq \max \{ \degD  f, \degD g\}$.
\item There exist $q,r\in \RR(t)[\D;']$ with $\degD r < \degD g$ such
  that $f=qg+r$ (Division with Remainder).
\item There exists $h\in \RR(t)[\D,']$ of maximal degree in $\D$ with
  $f=w_1h$ and $g=w_2h$. $h$ is called the GCRD (Greatest Common Right
  Divisor) of $f$ and $g$.
\item There exist $w_3,w_4 \in R(t)[\D,']$ such that $w_3 f= w_4 g =
  h$ for $h$ of minimal degree. $h$ is called the LCLM (Least Common
  Left Multiple) of $f$ and $g$.
\item $\degD \lclm(f,g) = \degD f + \degD g - \degD \gcrd(f,g)$.
\end{enumerate}

These immediately imply the following characterization of a
non-trivial GCRD.

\begin{lemma}
  \label{lem:nontrivgcrd}
  Suppose $f,g \in \RR(t)[\D;']$, with $\degD f=m$ and $\degD g=n$.
  Then $\degD \gcrd(f,g) \geq 1 $ if and only if there exists $u,v \in
  \RR(t)[\D;']$ such that $\degD u< n$, $\degD v< m$, and
  $uf+vg = 0$.
\end{lemma}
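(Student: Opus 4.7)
The plan is to prove both directions using the five properties of $\RR(t)[\D;']$ listed immediately before the lemma, particularly the degree formula $\degD \lclm(f,g) = m + n - \degD \gcrd(f,g)$. Implicitly, the $u, v$ in the statement are taken to be nonzero (if $u=0$ then $vg=0$ forces $v=0$, so the condition as stated is otherwise vacuous).

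For the forward direction, I would assume $h = \gcrd(f,g)$ has $\degD h \geq 1$ and use property 4 to produce nonzero $w_3, w_4 \in \RR(t)[\D;']$ with $w_3 f = w_4 g = \lclm(f,g)$. Setting $u = w_3$ and $v = -w_4$ immediately gives $uf + vg = 0$. The degree bounds then come from property 5: since $\degD \lclm(f,g) = m + n - \degD h \leq m + n - 1$, property 1 yields $\degD u + m \leq m + n - 1$, hence $\degD u \leq n-1 < n$, and symmetrically $\degD v < m$.

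For the reverse direction, I would take nonzero $u, v$ satisfying the hypotheses and observe that $uf = -vg$ is a nonzero common left multiple of $f$ and $g$. The LCLM characterization from property 4 then gives some $c \in \RR(t)[\D;']$, necessarily nonzero, with $uf = c \cdot \lclm(f,g)$. Comparing $\D$-degrees via property 1,
\[
\degD u + m = \degD c + \degD \lclm(f,g) \geq \degD \lclm(f,g) = m + n - \degD \gcrd(f,g),
\]
so $\degD \gcrd(f,g) \geq n - \degD u$. Since $\degD u < n$ by hypothesis, this forces $\degD \gcrd(f,g) \geq 1$, as required.

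The main ``obstacle'' — really just the delicate step — is the reverse direction, where one must be careful that $\RR(t)[\D;']$ is a domain (so $uf \neq 0$) and that every common left multiple is a left multiple of the LCLM; this is a standard consequence of the left Euclidean structure, but non-commutativity means the order of the factors $c$ and $\lclm(f,g)$ matters. Once the LCLM divisibility is invoked, the degree count is routine. No linearization or Sylvester-matrix machinery is needed at this stage; the lemma is essentially just the left-Euclidean analogue of the classical Bezout identity.
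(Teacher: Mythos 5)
Your proof is correct and follows exactly the route the paper intends: the paper gives no explicit proof, asserting the lemma follows ``immediately'' from the five Ore properties listed just before it, and your argument is precisely the standard one those properties support (LCLM existence plus the degree formula $\degD \lclm(f,g) = m+n-\degD\gcrd(f,g)$ for the forward direction, and divisibility of any common left multiple by the LCLM for the converse). Your observation that $u,v$ must implicitly be taken nonzero is a worthwhile clarification the paper leaves unstated.
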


This lemma will enable us to set up a resultant-like linear system
over $\RR$ for the GCRD, which will lead to the desired algorithms.

Because of the non-commutative property of $\RR(t)[\D;']$, it will be
important to maintain a canonical form for any $f\in\RR(t)[\D;']$.  We
will always write
\[
f=\frac{1}{f_{-1}(t)} \sum_{i=0}^m f_i(t) \D^i,
\]
for polynomials $f_{-1},f_0,\ldots,f_m \in \RR[t]$, with coefficients
in $\RR(t)$ always written to the left of powers of $\D$.  Moreover,
for $f$ as above,  and $\ell> \deg f$, we define
\[
\Psi_{\ell}(f) = \frac{1}{f_{-1}}\cdot (f_0,
f_1,\ldots,f_m,0,\ldots,0)\in\RR(t)^\ell.
\]
I.e., $\Psi_{\ell}$ maps polynomials in $\RR(t)[\D;']$ of degree (in
$\D$) less than $\ell$ into $\RR(t)^\ell$.

It will also be useful to ensure that our differential polynomials are
\emph{primitive}.  

\begin{defn}[Primitive Differential Polynomial]
  Let $f \in \RR[t][\D;']$ where $\degD f = m$ is in standard
  form. The content of $f$ is given by $\content(f) =
  \gcd(f_0,f_1,\ldots,f_m)$.  If $\content(f) = 1$, we say that $f$ is
  primitive.
\end{defn}

For our primary problem of computing GCRDs of differential polynomials
$f,g\in\RR(t)[\D;']$, we will assume both that the coefficients of
$f,g$ are polynomials in $\RR[t]$, and that $f$ and $g$ are primitive.
In the exact setting this is clearly without loss of generality, since
non-zero elements of $\RR(t)$ are units (and hence we can multiply and
divide on the left by them).  For approximate differential polynomials we
must compute the approximate GCD of a number of polynomials in
$\RR[t]$.  This is in itself an important research problem, but has
been considered deeply in \cite{KalYan06}, which also contains a
useful and current survey of related approximate GCD results.

\subsection{Norms of differential polynomials}
\label{ssec:norm}

To provide our notion of approximate differential polynomials a formal
meaning we need a proper definition of the \emph{norm} of a
differential polynomial. For this we will use the \emph{coefficient
  2-norm} as follows.
\begin{defn}~\\[-\baselineskip]
  \begin{itemize}
  \item[(i)] For (a regular polynomial) $p=\sum_{0\leq i\leq k} p_it^i\in\RR[t]$, define
   $\norm{p}=\norm{p}_2 = (\sum_{0\leq i\leq k} p_i^2)^{1/2}$.
 \item[(ii)] For $f=f_0+f_1\D+\cdots+f_m\D^m\in \RR[t][\D;']$ define \linebreak
   $\norm{f}=\norm{f}_2 = \left (\sum_{0\leq i\leq m} \norm{f_i}^2
   \right ) ^{1/2} $.
\end{itemize}
\end{defn}

Note that we are assuming that our coefficients are polynomials in
$\RR[t]$, and not rational functions.  One could extend the definition
of norm to encompass coefficients in $\RR(t)$, but it will not be
necessary in this paper.

\subsection{The approximate GCRD problem}

We can now formally state the main problem under consideration in this
paper.

\begin{problem}\label{prb:main-prob}
  Given $f,g\in\RR[t][\D;']$, find a small $\epsilon>0$ and
  $\ftil,\gtil\in\RR[t][\D;']$ with $\norm{f-\ftil}<\epsilon\norm{f}$
  and $\norm{g-\gtil}<\epsilon\norm{g}$ such that
  $\degD\gcrd(\ftil,\gtil)\geq 1$.
\end{problem}

That is, we are looking for ``nearby'' differential polynomials, in
the coefficient 2-norm, which possess a non-trivial GCRD.

 
%
 \section{GCRD via Linear Algebra}
\label{sec:lin-alg}

In this section we demonstrate how to reduce the computation of the
GCRD to that of linear algebra over $\RR(t)$, and then over $\RR$
itself.  This approach has been used in the exact computation of GCRDs
(see \citep{Li97}) and Hermite forms \citep{GieKim13}, and has the
benefit of reducing differential, and more general Ore problems, to a
system of equations over a commutative field.  Here we will show that
it makes our approximate version of the GCRD problem amenable to
numerical techniques.

\subsection{Reduction to linear algebra over {\large $\RR(t)$}}
\label{ssec:dsyl}

Let $f,g \in \RR(t) [\D;']$ have degrees in $\D$ of $m$ and $n$
respectively. Then by Lemma \ref{lem:nontrivgcrd}, $\degD \gcrd(f,g)
\geq 1$ if and only if there exists $u,v \in \RR(t)[\D;']$ such that
$\degD u < n$ and $\degD v < m$ and $uf+vg = 0$. We can encode the
existence of $u,v$ as an $(m+n)\times (m+n)$ matrix over $\RR(t)$ as
follows.  For convenience define the matrix 
\[
V = V(f,g) = 
\begin{pmatrix}
  \Psi_{m+n}(f) \\
  \Psi_{m+n}(\D f) \\
  \vdots \\
  \Psi_{m+n}(\D^{n-1} f) \\
  \Psi_{m+n}(g) \\
  \Psi_{m+n}(\D g) \\
  \vdots \\
  \Psi_{m+n}(\D^{m-1} g)
\end{pmatrix} \in \RR(t)^{(m+n) \times (m+n)},
\]
the differential Sylvester matrix of $f$ and $g$  (see \citep{Li97}),
analogous to the Sylvester matrix for usual polynomials (see, e.g.,
\cite[Chapter 6]{MCA3}).   

The utility of this comes in the following observation.  Let
\[
u = \sum_{0\leq i\leq n-1} u_i \D^i, ~~
v = \sum_{0\leq i\leq m-1} v_i \D_i \in\RR(t)[\D;'],
\]
and
\[
w = (u_0,u_1,\ldots, u_{n-1}, v_0, v_1, \ldots, v_{m-1}) \in \RR(t)^{1
  \times (m+n)}.
\] 
Then $uf+vg=0$ implies $wV = 0$.  This means that $w$ is a non-trivial
vector in the nullspace of $V$, and in particular, $V$ is singular.
Clearing denominators of $f$ and $g$ we may assume that $u,v \in
\RR[t][\D;']$, i.e., they have polynomial coefficients, which implies
that $V \in \RR[t]^{(m+n) \times (m+n)}$.  Moreover, if $f,g \in
\RR[t][\D;']$ have degrees in $t$ at most $d$ then $\deg_t V_{ij}\leq
d$.

\begin{lemma}
  \label{lem:RRtGCRD}
  Suppose $f,g\in\RR[t][\D;']$, where $\degD f=m$, $\degD g=n$ and $\deg_t
  f\leq d$ and $\deg_t g\leq d$.
  \begin{itemize}
  \item $V=V(f,g)$ is singular if and only $ \degD \gcrd(f,g)\geq 1$.
  \item $\degD\gcrd(f,g)= \dim \lnullspace(V)$, where $\lnullspace(V)$
    is the left nullspace of $V$.
  \item For any 
    $w = (u_0,\ldots, u_{n-1}, v_0, \ldots, v_{m-1}) \in \RR(t)^{1
      \times (m+n)}$
    such that $wV=0$, we have $uf+vg=0$, where
    $u=\sum_{0\leq i<n} u_i\D^i$ and $v=\sum_{0\leq i<m} v_i\D^i$.
  \item Suppose that $\degD\gcrd(f,g)\geq 1$.  Then there exists a $w\in
    \RR[t]^{1\times (m+n)}$ such that $wV = 0$  and $\deg_t w \leq
    \mu = 2(m+n)d$.
  \end{itemize}
\end{lemma}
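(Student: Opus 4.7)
The plan is to establish a single coefficient-vector identity that collapses parts~(1) and~(3) into immediate consequences of Lemma~\ref{lem:nontrivgcrd}, then to handle the dimension count in~(2) through the factorization $f=u_0 h,\ g=v_0 h$, and finally the degree bound in~(4) by a Cramer/cofactor argument. The crucial identity is that the rows of $V$ are precisely the coefficient vectors of $\D^i f$ and $\D^j g$, so for $w=(u_0,\ldots,u_{n-1},v_0,\ldots,v_{m-1})$ representing $u=\sum_i u_i\D^i$ and $v=\sum_j v_j\D^j$, linearity yields
\[
wV \;=\; \sum_{i=0}^{n-1} u_i\,\Psi_{m+n}(\D^i f)+\sum_{j=0}^{m-1} v_j\,\Psi_{m+n}(\D^j g)\;=\;\Psi_{m+n}(uf+vg).
\]
Since $\degD(uf+vg)<m+n$, $\Psi_{m+n}$ is injective on this range, so $wV=0$ iff $uf+vg=0$. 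Part~(3) follows directly, and combined with Lemma~\ref{lem:nontrivgcrd} this also yields part~(1).

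For part~(2), let $k=\degD\gcrd(f,g)$, set $h=\gcrd(f,g)$, and write $f=u_0 h$, $g=v_0 h$ with $\gcrd(u_0,v_0)=1$. Because $\RR(t)[\D;']$ is a domain, $uf+vg=0$ is equivalent to $uu_0+vv_0=0$, so $uu_0=-vv_0$ is a common left multiple of $u_0,v_0$ and hence a left multiple of $L:=\lclm(u_0,v_0)$. Property~5 combined with $\gcrd(u_0,v_0)=1$ gives $\degD L=(m-k)+(n-k)$. Writing $uu_0=qL$, the constraint $\degD u<n$ forces $\degD q<k$, and conversely each $q$ of degree less than $k$ determines a unique pair $(u,v)$ of the required degrees. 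Hence $\dim\lnullspace(V)=k$ over $\RR(t)$.

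For part~(4), each entry of $V$ lies in $\RR[t]$ with $t$-degree at most $d$, since applying $\D$ to a polynomial in $\RR[t][\D;']$ both shifts and differentiates coefficients without raising their $t$-degree. Let $r=\rank V<m+n$; choose a non-singular $r\times r$ submatrix $V_{I,J}$ and an additional row index $i_0\notin I$. Cramer's rule on the $(r+1)\times r$ block $V_{I\cup\{i_0\},J}$ supplies a left null vector whose entries are $r\times r$ minors of $V$, each of $t$-degree at most $rd\leq(m+n)d$. Extending by zeros outside $I\cup\{i_0\}$ yields $w\in\RR[t]^{1\times(m+n)}$; it annihilates the $J$-columns by construction, and every other column of $V$ is an $\RR(t)$-linear combination of the $J$-columns (since $V_{I,J}$ is non-singular and $\rank V=r$), so $w$ annihilates all of $V$, giving $\deg_t w\leq(m+n)d\leq 2(m+n)d$. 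The main obstacle is the dimension count in~(2): the non-commutativity of $\RR(t)[\D;']$ rules out naive candidates like $(u,v)=(v_0,-u_0)$, and extracting exactly $k$ degrees of freedom (not $2k$) requires routing through the LCLM and invoking the degree formula of property~5.
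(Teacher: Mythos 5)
Your proof is correct and takes essentially the route the paper sketches: the coefficient-vector identity $wV=\Psi_{m+n}(uf+vg)$ together with Lemma~\ref{lem:nontrivgcrd} for parts (1) and (3), and a Cramer/minor construction with the degree-$d$ bound on the entries of $V$ for part (4). Your $\lclm$-based dimension count for part (2) supplies an actual argument where the paper only writes ``follows from the discussion above,'' but it is the standard one (factor out $h$, cancel in the domain, parametrize common left multiples of the coprime cofactors by $q$ with $\degD q<k$) and clearly what the authors intend.
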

\begin{proof}
  Part (i) -- (iii) follow from Lemma \ref{lem:nontrivgcrd} and the
  discussion above.  Part (iv) follows from an application of
  Cramer's rule, and a bound on the degree of the determinants of a
  polynomial matrix.
\end{proof}

\begin{example}
 Let \[f ={{\it \D}}^{2}+ \left(  0.5\,t+ 1.0 \right) {\it \D}+ 0.3\,t+ 0.06\,{t
}^{2}+ 0.2\] and 
 \[g={{\it \D}}^{2}+ \left(  0.9\,{t}^{2}+ 1.0+ 0.2\,t \right) {\it \D}+
 0.2+ 0.9\,{t}^{2}+ 0.18\,{t}^{3}
.\] 
The corresponding differential Sylvester matrix $V$ is given by

\[
 \begin {pmatrix}  0.3\,t+ 0.06\,{t}^{2}+ 0.2& 0.5\,t+ 1.0&
1&0\\   0.3+ 0.12\,t& 0.7+ 0.06\,{t}^{2}+ 0.3\,t& 0.5
\,t+ 1.0&1\\   0.2+ 0.9\,{t}^{2}+ 0.18\,{t}^{3}& 0.9
\,{t}^{2}+ 1.0+ 0.2\,t&1&0\\   1.8\,t+ 0.54\,{t}^{2}&
 0.9\,{t}^{2}+ 0.18\,{t}^{3}+ 1.8\,t+ 0.4& 0.9\,{t}^{2}+ 1.0+ 0.2\,t&1
\end {pmatrix}. 
\]

$V$ has rank $3$ with the (left) null space vector 

\[
\left( \begin {array}{c} - 27.0\,{t}^{4}+ 9.0\,{t}^{3}+ 60.0\,t- 10.0
\\  - 30.0\,{t}^{2}+ 10.0\,t\\   9.0
\,{t}^{3}- 3.0\,{t}^{2}- 60.0\,t+ 10.0\\   30.0\,{t}^
{2}- 10.0\,t\end {array} \right)  ^T .
\]
\end{example}

\begin{defn}
For any matrix $V\in\RR(t)[\D;']$, we define the Frobenius norm
$\norm{V}_F$ by
\[
\norm{V}_F^2 = \sum_{ij} \norm{V_{ij}}^2.
\]
\end{defn}
\begin{lemma}
  \label{lem:vandnorm}
  Let $f,g\in\RR[t][\D;']$ have $\degD f\leq m$, $\degD g\leq n$, and
  both have degree in $t$ at most $d$. Let $V=V(f,g)$ be the
  differential Sylvester matrix of $f$ and $g$.  Then
  $\norm{V}_F^2\leq d^{2n}\norm{f}^2+d^{2m}\norm{g}^2$.
\end{lemma}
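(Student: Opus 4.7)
The plan is to read off the rows of $V$, bound each row's contribution to $\norm{V}_F^2$ in terms of $\norm{f}$ (resp.\ $\norm{g}$) and $d$, and then sum.  Since $\Psi_{m+n}$ simply lists the $\D$-coefficients of a polynomial and zero-pads to length $m+n$, we have $\norm{\Psi_{m+n}(p)}_F^2 = \norm{p}^2$ for any $p$ with $\degD p < m+n$.  Reading off the rows of $V$ immediately yields
\[
\norm{V}_F^2 \;=\; \sum_{i=0}^{n-1}\norm{\D^i f}^2 \;+\; \sum_{j=0}^{m-1}\norm{\D^j g}^2.
\]

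The main technical step is to bound $\norm{\D^i f}$ in terms of $\norm{f}$ and $d$.  I would view $\D$ as the sum of two commuting linear operators acting on the coefficient vector $(f_0,\ldots,f_m)$: a coordinate-shift $S$ (arising from the $f_k\D\cdot \D^k \mapsto f_k\D^{k+1}$ half of the product) and a termwise $t$-derivative $D_t$ (arising from the commutation rule $\D f_k = f_k\D + f_k'$).  The shift $S$ is an isometry in the coefficient 2-norm, and since every $f_k$ has $\deg_t f_k \leq d$ the elementary polynomial estimate $\norm{p'}\leq d\norm{p}$ gives $\norm{D_t}\leq d$.  Because $S$ and $D_t$ commute, the binomial expansion together with the triangle inequality gives
\[
\norm{\D^i f} \;\leq\; (\norm{S}+\norm{D_t})^i\,\norm{f} \;\leq\; (1+d)^i\norm{f},
\]
and symmetrically $\norm{\D^j g}\leq (1+d)^j\norm{g}$.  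Substituting into the identity for $\norm{V}_F^2$ above produces an inequality of the shape claimed in the statement.

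The main obstacle is matching the exact constants.  The elementary argument above naturally yields $(1+d)^{2i}$ rather than the $d^{2i}$ appearing in the statement, and closing that gap either requires sharpening the per-coefficient derivative estimate (for instance by using that the top-degree coefficient of each $f_k$ drops out of $\norm{f_k'}^2$) or by applying the Leibniz-type expansion $\D^i f_k = \sum_{j=0}^i \binom{i}{j} f_k^{(j)} \D^{i-j}$ directly and grouping terms more carefully.  Conceptually all of the work is done by the commuting decomposition $\D = S + D_t$; the remainder is a routine numerical tightening to recover the stated constants.
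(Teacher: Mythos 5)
Your proposal follows the same skeleton as the paper's proof: both start from the row identity $\norm{V}_F^2=\sum_{0\le i<n}\norm{\D^i f}^2+\sum_{0\le i<m}\norm{\D^i g}^2$ and then iterate a one-step bound on the growth of $\norm{\D^k f}$. The only genuine divergence is in that one-step bound. The paper asserts $\norm{\D f}^2\le (d^2+1)\norm{f}^2$ outright and raises it to the $k$-th power, whereas you pass through the commuting decomposition $\D = S+D_t$ with $\norm{S}\le 1$ and $\norm{D_t}\le d$, obtaining $\norm{\D^i f}\le (1+d)^i\norm{f}$ by the triangle inequality. Your version is in fact the more defensible of the two: the coefficient of $\D^k$ in $\D f$ is $f_{k-1}+f_k'$, and the paper's squared bound silently discards the cross term $2\langle f_{k-1},f_k'\rangle$. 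For $f=t\D+1$ with $d=1$ one gets $\norm{\D f}^2=5>4=(d^2+1)\norm{f}^2$, while your bound $(1+d)^2\norm{f}^2=8$ covers it.

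Concerning the ``obstacle'' you flag---recovering $d^{2i}$ from $(1+d)^{2i}$---do not spend effort trying to close it exactly, because the paper does not really close it either: after summing, the paper's chain requires $\sum_{0\le i<n}(d^2+1)^i\le d^{2n}$, which already fails at $d=1$ (the left side is $2^n-1$). The constant $d^{2n}$ in the statement should be read as an order-of-magnitude bound for $d$ not too small (consistent with the paper's own remark about falling factorials and instability for large $\degD$); the honest conclusion from either argument is $\norm{V}_F^2\le c_d^{\,n}\norm{f}^2+c_d^{\,m}\norm{g}^2$ with $c_d=(1+d)^2$ from your route. So: essentially the same approach as the paper, with a sounder one-step estimate on your side, and the residual constant mismatch is a defect of the lemma as stated rather than of your proof. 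The only refinement worth adding, if you want tighter constants for higher derivatives, is the falling-factorial improvement $\norm{p^{(j)}}\le d(d-1)\cdots(d-j+1)\norm{p}$ obtained because each differentiation lowers $\deg_t$ by one; it sharpens the $D_t^j$ terms in your binomial expansion but does not eliminate the $+1$ contributed by the shift $S$.
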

\begin{proof}
  First note that $\norm{\D f}^2\leq (d^2+1)\norm{f}^2$, and hence
  $\norm{\D^kf}^2\leq (d^2+1)^k\norm{f}^2$. Thus
  \begin{align*}
    \norm{V}_F^2 = & \sum_{0\leq i<n} \norm{\D^if}^2 + \sum_{0\leq i<m}
    \norm{\D^i g}^2\\
    \leq & \sum_{0\leq i<n} (d^2+1)^i\norm{f}^2 + \sum_{0\leq i<m} (d^2+1)^i\norm{g}^2\\
    \leq & d^{2n}\norm{f}^2 + d^{2m}\norm{g}^2. \qed
  \end{align*}
\end{proof}

Note that the exponentials of $d$ (or, more precisely, the falling
factorials) are intrinsic in the resultant formulation, but will cause
considerable numerical instability for large degrees in $\D$.  As is
typical with differential polynomials we generally restrict ourselves
to small degrees in $\D$.

\subsection{Reduction to linear algebra over {\large $\RR$}}
\label{ssec:idsyl}

Next we show how to encode the existence of a GCRD as a linear algebra
problem over $\RR$, as opposed to $\RR(t)$.  Again let $V\in \RR[t]^{
  (m+n) \times (m+n)}$ be the differential Sylvester matrix of $f,
g\in\RR[t][\D;']$ of degrees $m$ and $n$ respectively in $\D$, and
degrees at most $d$ in $t$.  From Lemma \ref{lem:RRtGCRD} we know that
if a GCRD exists then there is a $w\in\RR[t]^{1\times (m+n)}$ such
that $wV=0$, with $\deg_t w\leq \mu = 2(m+n)d$.

Now suppose $b = b_0 + b_1 t+\cdots + b_{\mu+d}t^{\mu+d} \in R[t]$ and
define 
\[
\Psi(b) =
(b_0,b_1,\ldots, b_{\mu+d}) \in \RR^{1 \times (\mu+d+1)}.
\] 
For any polynomial $a\in\RR[t]$ of degree at most $d$ let
\[
\Gamma(a) = \begin{pmatrix}
  \Psi(a) \\
  \Psi(t a) \\
  \vdots \\
  \Psi( t^{\mu} a)
\end{pmatrix} \in \RR^{(\mu+1) \times (\mu+d+1)}.
\]
$\Gamma(a)$ is the left multiplier matrix of $a$ with respect to the
basis $\langle 1,t,\ldots,t^{\mu+d} \rangle$.  

Now given the $(m+n) \times (m+n)$ matrix $V$ we apply $\Gamma$
entry-wise to $V$ to obtain $\Vhat\in \RR^{(m+n)(\mu+1) \times
  (m+n)(\mu +d +1)}$; each entry of $V$ in $\RR[t]$ is mapped to a
block entry $\RR^{(\mu+1)\times (\mu+d+1)}$ in $\Vhat$.  We refer to
$\Vhat$ as the \emph{inflated differential Sylvester matrix} of $f$
and $g$.

\begin{lemma}
  \label{lem:RRGCRD}
  Let $f,g\in\RR[t][\D;']$ be as above, with differential Sylvester
  matrix $V\in\RR[t]^{(m+n)\times (m+n)}$ and inflated differential
  Sylvester matrix $\Vhat\in\RR^{(m+n)(\mu+1)\times (m+n)(\mu+d+1)}$.
  There exists a $w\in \RR[t]^{1\times (m+n)}$ such that $wV=0$, if
  and only if there exists a $\what \in \RR^{(\mu+d+1)\times (m+n)(\mu+1)}$
  such that $\what \Vhat =0$.  More generally, 
  \[
  \degD\gcrd(f,g)= \frac{\dim\lnullspace(\Vhat)}{\mu+d+1}.
  \]
\end{lemma}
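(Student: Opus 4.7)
The plan rests on recognizing that $\Vhat$ is, by construction, the matrix of the $\RR$-linear map sending the coefficient vector of a polynomial row-vector $w\in\RR[t]^{1\times(m+n)}$ of $t$-degree at most $\mu$ to the coefficient vector of $wV$ (whose entries have $t$-degree at most $\mu+d$). I would first verify the scalar ingredient: for any $c,a\in\RR[t]$ with $\deg c\le\mu$ and $\deg a\le d$,
\[
\Psi(c)\,\Gamma(a) \;=\; \Psi(c\,a),
\]
since the $j$-th row of $\Gamma(a)$ is $\Psi(t^j a)$, and the linear combination $\sum_j c_j\,\Psi(t^j a)$ is precisely the coefficient vector of $c(t)a(t)$. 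This is a direct computation.

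Next I would promote this to the block identity that drives the lemma. Given $w=(w_1,\ldots,w_{m+n})$ with each $\deg w_i\le\mu$, set
\[
\what \;=\; \bigl(\Psi(w_1),\,\Psi(w_2),\,\ldots,\,\Psi(w_{m+n})\bigr) \in \RR^{1\times (m+n)(\mu+1)}.
\]
Because $\Vhat$ is built by replacing each scalar entry $V_{ij}$ of $V$ with the block $\Gamma(V_{ij})$, the $j$-th block of $\what\,\Vhat$ reads
\[
\sum_{i} \Psi(w_i)\,\Gamma(V_{ij}) \;=\; \Psi\!\Bigl(\sum_i w_i V_{ij}\Bigr) \;=\; \Psi\bigl((wV)_j\bigr).
\]
Thus $\what\,\Vhat$ is literally the block-concatenated coefficient vector of $wV$, and the map $w\mapsto\what$ is an $\RR$-linear bijection between polynomial row-vectors of $t$-degree at most $\mu$ and rows in $\RR^{1\times(m+n)(\mu+1)}$, preserving nontriviality. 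Under this bijection, $wV=0$ if and only if $\what\,\Vhat=0$. Coupling this with part (iv) of Lemma~\ref{lem:RRtGCRD}, which guarantees that whenever a nontrivial polynomial null-vector exists it can be taken with $\deg_t w\le\mu$, yields the ``if and only if'' half of the lemma.

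For the dimension formula I would pass to the left null module $\mathcal{N}=\{w\in\RR[t]^{1\times(m+n)}:wV=0\}$, which is a free $\RR[t]$-module of rank $k=\degD\gcrd(f,g)$ by Lemma~\ref{lem:RRtGCRD}. Fixing a Popov (row-reduced) basis $p^{(1)},\ldots,p^{(k)}$, every polynomial null-vector of degree at most $\mu$ is uniquely $\sum_i c_i(t)\,p^{(i)}$, and row-reducedness decouples the degree constraint into the independent bounds $\deg c_i\le\mu-\deg p^{(i)}$. Counting $\RR$-coefficients transfers under the bijection of the previous paragraph to give
\[
\dim_{\RR}\lnullspace(\Vhat) \;=\; \sum_{i=1}^{k}\bigl(\mu-\deg p^{(i)}+1\bigr).
\]
The step I expect to require the most care is reconciling this count with the divisor $\mu+d+1$: one must use the degree bound $\mu=2(m+n)d$ from Lemma~\ref{lem:RRtGCRD}(iv) to control the Popov row degrees of a suitably normalized basis so that the totals absorb into a clean multiple of $\mu+d+1$, leaving the quotient equal to $k$. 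The $\Gamma$-correspondence is essentially formal; it is this degree accounting, together with confirming the right normalization conventions for a minimal null module basis, that constitutes the real work.
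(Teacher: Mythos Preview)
Your treatment of the equivalence is correct and considerably more explicit than the paper's, which simply asserts that everything ``follows directly from the definition of $\Gamma$ and Lemma~\ref{lem:RRtGCRD}.''  The $\Gamma$-correspondence you set up is exactly the right mechanism, and the bijection between polynomial null vectors of $t$-degree at most $\mu$ and the left nullspace of $\Vhat$ is the whole content of the ``if and only if'' claim.

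Where your plan runs into a genuine obstacle is the dimension formula, and your instinct that this is ``the real work'' is well founded---though for a different reason than you anticipate.  Your count
\[
\dim_{\RR}\lnullspace(\Vhat)\;=\;\sum_{i=1}^{k}\bigl(\mu-\deg p^{(i)}+1\bigr)\;=\;k(\mu+1)-\sum_{i=1}^{k}\deg p^{(i)}
\]
is correct, and it already shows that the stated equality $\dim_{\RR}\lnullspace(\Vhat)=k(\mu+d+1)$ cannot hold when $k,d>0$: equating the two would force $\sum_i\deg p^{(i)}=-kd<0$.  The paper's own worked example after Lemma~\ref{lem:RRtGCRD} is a concrete witness: there $m=n=2$, $d=3$, $\mu=24$, $k=1$, and the displayed primitive null vector has $t$-degree $4$, so $\dim_{\RR}\lnullspace(\Vhat)=24-4+1=21$, whereas $k(\mu+d+1)=28$.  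No renormalization of a minimal basis will absorb this gap, because the Popov row degrees are invariants of the null module.  The displayed identity should be read as an approximation feeding the rank-detection heuristic; note that Algorithm~\ref{alg:RankAlg} applies a ceiling $\lceil\,\cdot/(\mu+d+1)\rceil$ precisely to accommodate this slack.  The honest outcome of your degree accounting is thus not a proof of the formula but a demonstration that it is inexact as stated.
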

\begin{proof}
  This follows directly from the definition of $\Gamma$ and Lemma
  \ref{lem:RRtGCRD}.
\end{proof}

\begin{example}
  Consider $f = \left( 0.84\,t+ 0.45 \right) {\it \D} + 0.11\,t+ 0.42$
  and $g= 0.66\,{\it \D}+ 0.92\,t$. Then the matrix $\widehat V(f,g)$
  is given by 
  {\small
  \[
  \begin{pmatrix} 
    0.42& 0.11&0&0&0&0& 0.45& 0.84&0 &0&0&0 \\[2pt]
    0& 0.42& 0.11&0&0&0&0& 0.45& 0.84&0&0&0 \\[2pt]
    0&0& 0.42& 0.11&0&0&0&0& 0.45& 0.84&0&0 \\[2pt] 
    0&0&0& 0.42& 0.11&0&0&0&0& 0.45& 0.84&0 \\[2pt]
    0&0&0&0& 0.42& 0.11&0&0&0&0& 0.45& 0.84 \\[2pt] 
    0& 0.92&0&0&0&0& 0.66&0&0&0&0&0 \\[2pt] 
    0&0& 0.92&0&0&0&0& 0.66&0&0&0&0 \\[2pt] 
    0&0&0&   0.92&0&0&0&0& 0.66&0&0&0 \\[2pt] 
    0&0&0&0& 0.92&0&0&0&0& 0.66&0&0 \\[2pt] 
    0&0&0&0&0& 0.92&0&0&0&0& 0.66&0
  \end{pmatrix}
  \]}
\end{example}

We can now bound the norm of the inflated differential Sylvester
matrix.

\begin{lemma}
  \label{lem:infvandnorm}
  Let $f,g\in\RR[t][\D;']$ have $\degD f\leq m$, $\degD g\leq n$ and
  both have degree degree at most $d$ in $t$.  Let
  $\Vhat\in\RR^{(m+n)(\mu+1)\times (m+n)(\mu+d+1)}$ be the inflated
  differential Sylvester matrix of $f$ and $g$, where $\mu=2(m+n)d$.
  Then $\norm{\Vhat}_2\leq \mu \cdot (d^{2n}\norm{f}^2+d^{2m}\norm{g}^2)$.
\end{lemma}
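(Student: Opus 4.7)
The plan is to reduce the bound on $\norm{\Vhat}_2$ to the Frobenius bound on $V$ already established in Lemma~\ref{lem:vandnorm}. The key observation is that the inflation map $\Gamma$ is essentially a ``Toeplitz/shift'' construction: for $a\in\RR[t]$ of degree at most $d$, the rows of $\Gamma(a)$ are the coefficient vectors of $a, ta, t^2 a, \ldots, t^{\mu}a$, each of which has 2-norm exactly $\norm{a}$ (shifting by $t$ only permutes coefficients into distinct slots, since the target vector has length $\mu+d+1$). Hence each of the $\mu+1$ rows of $\Gamma(a)$ contributes $\norm{a}^2$ to the Frobenius norm.

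First I would establish the block identity
\[
\norm{\Gamma(a)}_F^2 \;=\; (\mu+1)\,\norm{a}^2
\]
for every polynomial entry $a$. Then, because $\Vhat$ is obtained from $V$ by replacing each scalar entry $V_{ij}$ with the block $\Gamma(V_{ij})$ and these blocks occupy disjoint row/column ranges, the Frobenius norm decomposes as
\[
\norm{\Vhat}_F^2 \;=\; \sum_{i,j} \norm{\Gamma(V_{ij})}_F^2 \;=\; (\mu+1)\sum_{i,j}\norm{V_{ij}}^2 \;=\; (\mu+1)\,\norm{V}_F^2 .
\]
Next I would invoke the standard inequality $\norm{\Vhat}_2 \leq \norm{\Vhat}_F$, and finally apply Lemma~\ref{lem:vandnorm} to control $\norm{V}_F^2$ by $d^{2n}\norm{f}^2 + d^{2m}\norm{g}^2$. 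Combining these three steps yields
\[
\norm{\Vhat}_2^2 \;\leq\; (\mu+1)\bigl(d^{2n}\norm{f}^2+d^{2m}\norm{g}^2\bigr),
\]
from which the stated bound follows (absorbing the $+1$ into $\mu$).

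The only substantive content is the first identity; everything else is bookkeeping. The main subtlety I would be careful about is verifying that $\mu+d+1$ is large enough that the shifts $t^i a$ for $0\leq i\leq\mu$ all fit inside the row length $\mu+d+1$, so no coefficients are truncated and the row norms really do equal $\norm{a}$. Since $\deg(t^\mu a)\leq \mu+d$, this just fits, which is precisely the reason the target ambient space in the definition of $\Psi$ was chosen to have dimension $\mu+d+1$. No other step requires more than routine verification.
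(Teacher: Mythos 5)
Your proof is correct and follows essentially the same route as the paper: bound $\norm{\Vhat}_F^2$ by a multiple of $\norm{V}_F^2$ using the shift structure of the blocks $\Gamma(V_{ij})$, use $\norm{\Vhat}_2\leq\norm{\Vhat}_F$, and invoke Lemma~\ref{lem:vandnorm}. Your version is in fact slightly more careful --- the exact row count per block is $\mu+1$, where the paper silently writes $\mu$ --- though note that your final constant $(\mu+1)$ does not literally ``absorb'' into the stated $\mu$; it exceeds it by one, a discrepancy already present in the paper's own proof.
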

\begin{proof}
  Each row of $\Vhat$ consists precisely of entries of $V=V(f,g)$,
  shifted in position with respect to the previous row.  Thus
  \begin{align*}
  & \norm{\Vhat}_F^2 \leq \mu\cdot \norm{V}_F^2 \leq \mu \cdot
  (d^{2n}\norm{f}^2+d^{2m}\norm{g}^2), ~\mbox{and}\\
  & \norm{\Vhat}_2^2 \leq \norm{\Vhat}_F^2. 
  \end{align*}
  See \cite[\S2.3.2]{GolLoa13}.
\end{proof}


 \section{Computing an approximate GCRD}
\label{sec:approxgcrd}

We have now formulated the problem of determining the existence of
GCRD's of differential polynomials in $\RR[t][\D;']$ as one of
computing left null vectors of the inflated differential Sylvester matrix over
$\RR$.  We can now adapt the approach of \cite{CGTW95} of using the
SVD to find the nearest singular matrix.  While this will not be
perfect, in that the nearest singular matrix will not generally have
the same structure as the inflated differential Sylvester matrix, if our input
differential polynomials are ``nearby'' polynomials with a non-trivial
GCRD we will generally recover them.  

For convenience we will generally assume throughout this section that
our input differential polynomials are normalized, that is have
coefficient 2-norm $1$ under the definition of Section
\ref{ssec:norm}.  This can, of course, be enforced by a simple a
priori renormalization, i.e., dividing through by the actual norm, and
does not affect the generality or quality of the results.

\subsection{Finding nearby non-trivial GCRDs}
\label{sec:lsquare}

It is well understood how to find the nearest singular
\emph{unstructured} matrix to a given matrix via the singular value
decomposition (SVD); see \cite[\S8.6]{GolLoa13}.  We will assume in
this section that $V\in\RR[t]^{N\times N}$ is the differential
Sylvester matrix from Subsection \ref{ssec:dsyl}, of differential
polynomials $f,g\in\RR[t][\D;']$ of degrees (in $\D$) of $m$ and $n$
respectively, with $N=m+n$.  From this we construct the inflated
differential Sylvester matrix $\Vhat\in\RR^{N(\mu+1)\times
  N(\mu+d+1)}$ as in Subsection \ref{ssec:idsyl}.  Using the SVD we
can find the matrix $\Delta\Vhat$ of minimal 2-norm such that $\Vhat +
\Delta\Vhat$ has a prescribed rank.  First, we compute the SVD of $\Vhat$
as
\[
\Vhat = P\Sigma Q,
\]
where 
\[
P\in \RR^{N(\mu+1) \times N(\mu+1)}, ~~\mbox{and}~~ Q\in \RR^{N(\mu
  +d+1) \times N (\mu+d+1)}
\]
are orthogonal and
\[
\Sigma = \diag(\sigma_1,\ldots, \sigma_{N(\mu+1)}) \in
\RR^{ N(\mu+1)\times N(\mu+d+1)},
\]
satisfies $\sigma_1 \geq \sigma_2 \ldots \geq \sigma \geq
\sigma_{N(\mu+1)}$.  Note that $\Sigma$ is not square (it has more
columns than rows), and we simply pad it with zeros to obtain the
desired shape.

Now, by Lemma \ref{lem:RRGCRD}, we want to find a nearby matrix whose
left nullspace has reduced dimension by multiples of $(\mu+d+1)$, that
is
\[
P\, \Sigmabar\, Q = \Vhat + \Delta \Vhat,
\]
where
\begin{align*}
\Sigmabar & =
\diag(\sigma_1,\sigma_2,\ldots,\sigma_{(N-\varrho)(\mu+d+1)},0,\ldots,0)\\ 
& \in
\RR^{ N(\mu+1)\times N(\mu+d+1)},
\end{align*}
where $\varrho=\frac{\dim\lnullspace(\Vhat+\Delta\Vhat)}{\mu+d+1}$.
Then $\Vhat$ will be by the singular matrix, $\Vhat+\Delta\Vhat$ of prescribed
rank.  Of course, $\Vhat+\Delta\Vhat$ is probably an unstructured
matrix, and in particular, not an inflated differential Sylvester matrix.

Next we show that a matrix of the desired rank deficiency and (inflated
differential) structure exists within a
relatively small radius of $\Vhat$. Suppose there is an
$\ftil,\gtil\in\RR[t][\D;']$, with $\norm{\ftil-f}\leq\epsilon$ and
$\norm{\gtil-g}\leq \epsilon$, such that
$\degD\gcrd(\ftil,\gtil)=\varrho\geq 1$.  Let $\Delta f=f-\ftil$ and $\Delta
g=g-\gtil$, so $\norm{\Delta f}, \norm{\Delta g}<\epsilon$.
Moreover, the differential resultant matrix $W\in\RR[t][\D;']^{N\times
  N}$ formed from $\Delta f$, and $\Delta g$ has
\[
\norm{W}_F^2<(d^{2n}+d^{2m})\cdot\epsilon^2
\]
by Lemma \ref{lem:vandnorm}. Thus, the inflated differential resultant
matrix $\What\in\RR^{N(\mu+1)\times N(\mu+d+1)}$ has
\[
\norm{\What}_2^2<\mu\cdot(d^{2n}+d^{2m})\cdot\epsilon^2
\]
by Lemma \ref{lem:infvandnorm}.  Moreover,
$\dim\lnullspace(\Vhat+\What)=\varrho(\mu+d+1)$.  Thus, for
sufficiently small $\epsilon$ there exists a perturbation $\What$ such
that $\norm{\What}$ is small (at least assuming small $n,m$) and
$\Vhat+\What$ has appropriate rank and structure.

Due to the unstructured nature of $\Vhat + \Delta \Vhat$, one must
take care in working with a reasonable approximation for $f$ and $g$.
However, there is considerable redundancy of the coefficients of $f$
and $g$ in their inflated differential Sylvester matrix, if only
because each entry of $f$ and $g$ shows up multiple times under the
map $\Gamma$; see Section \ref{ssec:idsyl}.  There is, in fact, even more redundancy
because of the different derivatives in rows of the differential
Sylvester matrix, but we will not capitalize on this. 

\subsection{Computing {\large$\boldmath V+\Delta V$} and
  reconstructing {\large $\boldmath f+\Delta f$} and
  {\large $\boldmath g+\Delta g$ }}
\label{ssec:recovery}


To form $V + \Delta V$ we take  a weighted
average of the $t$-shifted blocks of $\Vhat+\Delta \Vhat $ that
correspond to $f$ and $g$. This involves identifying the ``blocks''
of $\Vhat+\Delta \Vhat$ that correspond to $f$ and $g$ and re-constructing them
entrywise ensuring the entries in degree $t$ of $f$ and $g$ do not increase.
The $f$ block consists of rows $1$ through $\mu+1$ and the $g$ block consists of
rows $\degD g(\mu+1)+1$ through $\degD g  (\mu+1) + \mu +2$. The columns in both
cases are the $\mu+d+1$ columns for each block entry. 

The reason that this reconstruction is often satisfactory
is we have that if $\varrho>0$ then $\Vhat$ has rank
$(N-\varrho)(\mu+d+1)$, and if $\sigma_{(N-\varrho)(\mu+d+1)+1} < \epsilon$, then 
\begin{align*}
\norm{\Sigma - \Sigmabar }_F^2 & =  \sum_{i= (N-\varrho)(\mu+d+1)+1}^{(\mu+1)N} |\sigma_i|^2 \\ 
			        &  \leq \epsilon^2 \left [(\mu+1)N - (N-\varrho)(\mu+d+1)\right]  \\
			        &  \leq \epsilon^2 \varrho(\mu+d+1).
\end{align*}
We have that $\norm{\What}^2_F  =   \norm{\Sigma - \Sigmabar}^2_F$ \citep[Corollory 2.4.3]{GolLoa13}
because the  singular values of $\What = P(\Sigma - \Sigmabar)Q$
are a permutation of the entries along the main diagonal of $\Sigma-\Sigmabar$.  

In our construction we require that $\deg_t
\ftil_i \leq \deg_t f_i$ for $0\leq i \leq \degD f$ and a similar
condition on $g$ in order to preserve the structure of $V+\Delta
V$. Furthermore, if the perturbation from adjusting the singular
values is small, then the non-zero entries are ``small'' and can
usually be ignored without losing too much information.

We should now have a matrix that is numerically singular, $V+\Delta V$
and perturbations $\Delta f$ and $\Delta g$ such that
$\mathtt{NumericGCRD}(f+\Delta f, g+\Delta g)$ is non trivial
and satisfies the conditions $\deg_t \Delta f_i \leq \deg_t f_i$
for $0\leq i\leq m$ and $\deg_t \Delta g_j \leq \deg_t g_j$ for 
$0\leq j \leq n$.

\subsection{Computing the Approximate GCRD}
\label{sec:ApproxGcrd}
Let $f,g \in \RR[t][\D;']$ have degrees $m$ and $n$ respectively. Let
$G=\gcrd(f,g)$ and $\degD(G) = D$. Then one may obtain an $\RR(t)$
multiple of $G$ by solving
\[
wV = \begin{pmatrix} 
  *_0 & *_1 & \cdots & *_D & 0 & \cdots & 0 \end{pmatrix},
\]
where we do not care about the entry $*$.  Solving this system will
give us a multiple of $G$, which we may assume is in $\RR[t]$ by
clearing fractions from the denominator.

\subsubsection*{Computing an Approximate Primitive GCRD}
When computing the GCRD numerically we obtain a result that is an
$\RR[t]$-multiple of a primitive GCRD upon clearing fractions.  In
some applications it is desirable to remove this content.  However the
coefficients are not exactly known so taking an exact GCD of the
coefficients will yield unsatisfactory answers. 

Consider the case of our \texttt{NumericGCRD} algorithm, Algorithm~\ref{alg:NumericGCRD}.
Our solution will have approximate content even if we use rational arithmetic
because $f+\Delta f$ and $g+\Delta g$ have an approximate GCRD but may
not have a GCRD algebraically due to round off errors in their
recovery.  If a primitive solution is desired, then techniques to
remove the content are required.

\subsubsection*{Leading Coefficient Known in Advance}
Since leading coefficients of GCRDs are propagated through
multiplication, we often know the leading coefficient of a GCRD in
advance. In general, the leading coefficient of the 
GCRD should be approximately one (i.e., a constant)
save a few special cases where the leading coefficients of
$f$ and $g$ have a non-trivial approximate GCD.

As an observation, given $f,g$ in $\RR[t][\D;']$ where $\degD f = m$ and $\degD g = n$
and $\gcd(f_m,g_n) =1$ then for a suitable approximate GCD algorithm 
(see \cite{Corless04} or \cite{ZenDay04})
we have that a primitive numeric GCRD of $f$ and $g$ satisfies
$\lcoeff(G) = 1$. The reason that we need the notion of approximate GCD
is that it is possible that we may have different algorithms returning different
answers. Consider $t^2$ and $t^2+2^{-s}$ where $s$ is large. Algebraically
both polynomials are co-prime but if $s$ is sufficiently large then some 
GCD algorithms will return a non trivial GCD.
To justify this, let $\degD G = D$. $G$ is a GCRD so we have that $G$ divides both
$f$ and $g$ on the right.  If
$G$ is primitive, it follows that $G_D | f_m$ and $ G_D | g_n$ so
$G_D | \gcd(f_m,g_n) = 1$. This occurs if and only if $G_D = 1$.

If we are given a candidate GCRD $\widetilde G$ that is not primitive,
where $\degD \widetilde G = D$, and we know the primitive GCRD has
leading coefficient 1, then $\lcoeff(\widetilde G) \approx \content
\widetilde G$.  It follows that $\widetilde G_D | \widetilde G_i$ for
$0\leq i <D$. This means that the remainders are numerically trivial,
so we can assume they are zero.  Using a method of approximate
division we can recover a primitive approximate GCRD.

 If the primitive GCRD is known to be $1$, it is not sufficient to
 solve the system
 \[ 
  wV = \begin{pmatrix} *_0 & *_1 & \hdots & *_{D-1} & 1 & 0 \hdots
   0 \end{pmatrix} ^T  
 \]
 because we are performing numerical linear algebra and we will often
 obtain a solution over $\RR(t)[\D;']$.
 One particular method of approximate division is by interpolation.
 This particular method of approximate division yields answers
 that one would expect with prior knowledge of a GCRD
 and a uniform distribution of noise. As expected with an interpolation
 based method, it will break down if we are unable to accurately compute
 the degrees of terms due to artifacts and round off errors. 
 The method of \cite{Bini86} for approximate division via a Fast
 Fourier Transform (FFT) is used for content removal in this case,
 and proves both fast and numerically robust in practice.

  \begin{algorithm}
   \label{alg:cont-remove}
   \caption{ Content Removal via Approximate Division}
   \begin{algorithmic}[1]
     \Require $\widetilde G \in \RR[t][\D;']$ of degree $D$ in $\D$ with
     appropriate degree of entries (leading coefficient has minimal
     degree).  
     \Ensure $G \approx \widetilde G / \content (\widetilde G)$.
   
     \State Vectorize $G$ as $(G_0,G_1,\ldots,G_D)$.
     \For { $i$ from
	    $0$ to $D-1$ } Remove left over artifacts from our linear algebra below
	    a given threshold. 
      \State Compute $\mathtt{FFT}(\widetilde G_i)$ and
	      $\mathtt{FFT}(\widetilde G_D)$ using a $\deg_t(\widetilde G_i) +1$ root of
	      unity. 
     \State Compute the element wise quotient $\widetilde G_i/
	    \widetilde G_D$ for each entry.
    \State Compute $G_i =
	    \mathtt{InverseFFT}(\widetilde G_i/ \widetilde G_D)$
    \State Set the
	    last $\degD$ terms to $0$ and remove other artifacts from the FFT
	    below a given threshold.
     \EndFor
     \State set $G_D = 1$ 
     \State Devectorize $(G_0,G_1,\ldots,G_D)$
     into $G$. 
     \State  \Return $G$
   \end{algorithmic}
 \end{algorithm}

\begin{example}[Numeric GCRD]

Consider $$f = - 0.45\,{{\it \D}}^{2}- 0.56\,t{\it \D}- 0.11\,{t}^{2}- 0.45$$ and 
$$g ={{\it \D}}^{3}+ \left( t+ 0.66 \right) {{\it \D}}^{2}+ \left(  2.0+
 0.952\,t \right) {\it \D}+ 0.66+ 0.292\,{t}^{2}.$$ 

A numeric GCRD of $f$ and $g$ is given by 

\begin{align*}
G &=     \left(  0.02781\,{t}^{2}+ 0.30990\,t- 0.01460\,{t}^{3}- 0.11380
	 \right) {\it \D} \\
	 +&  0.02781\,{t}^{3}+ 0.30993\,{t}^{2}- 0.01461\,{t}^
	 {4}- 0.11378\,t- 0.00002. 
\end{align*}
Given the low degrees and leading coefficients, a primitive GCRD is probably a unit in $\RR$.
Removing content with an FFT gives us a primitive numeric GCRD of $ {\it \D}+1.00000\,t$.
\end{example}

\subsubsection*{Leading Coefficient of {\large $\boldmath{G}$} Unknown}

It is not possible to determine the leading coefficient of a primitive GCRD
in advance when the leading coefficients of $f$
and $g$ have a non-trivial GCD, or if they share a nearby common solution.
In order work around this one would need to approximate 
$\content(G) = \gcd(G_0,G_1,\ldots, G_D)$ numerically, then perform an approximate
polynomial division.  We used the method of \cite{Corless04} to compute pair-wise GCDs
and obtained somewhat mixed results. In some instances the content had a degree
that was too small based on our construction or our division algorithm
did not provide an answer consistent with the GCRD we constructed. 
We might hope to overcome some of these problems using a more specific method for the
GCD of multiple polynomials, as developed in \cite{KYZ05}.
\pagebreak

\subsection{Algorithms}
This section provides a high-level description of the primary algorithms we are using,
in pseudocode. 

In practice, we will demonstrate that our algorithms work well on low
degree differential polynomials as input.  However, in general our algorithms
are not provably \emph{guaranteed} to give  close polynomials with a
non-trivial GCRD.  In practice, if $\norm{f} = \norm{g}=1$ and
$\norm{\Delta f} = \norm{\Delta g} < 0.1$
then for low degree differential polynomials we are usually able to get 
an answer that is acceptable based on our prior knowledge of $\gcrd(f,g)$. 

\begin{algorithm}[!h]
  \caption{\bf : NumericGCRD}
  \label{alg:NumericGCRD}

 \begin{algorithmic}[1]
   \smallskip
   \Require 
      \item[$\bullet$] $f,g \in \RR[t][\D;']$ non-zero with $\norm{f}=\norm{g}=1$;
     \item[$\bullet$]
       A search radius $\epsilon>0$.
       \smallskip
   \Ensure $G \approx \gcrd(f,g) \in \RR[t][\D;']$ with $\degD G\geq
   1$, or an indication that $f$ and $g$ are co-prime within search
   radius $\epsilon$.
   \medskip
   \State $m\gets \degD f$, $n\gets \degD g$, $d\gets \max\{\deg_t
   f,\deg_t g\}$ and $\mu \gets 2(m+n)d$.
   \State Form the differential Sylvester matrix\newline
             $V(f,g) \in \RR[t]^{(m+n)\times(m+n)}$.
   \State Form  the inflated differential Sylvester matrix\newline
            $\Vhat = \Vhat(f,g) \in \RR^{ (m+n)(\mu+1) \times (m+n)(\mu+d+1)}$ of $V$.
   \State Compute the numerical rank $r$ of $V$ using Algorithm~\ref{alg:RankAlg}
	  on $\Vhat$ with search radius $\epsilon$.
   \State If $r>0$ then set $D = m+n-r$. Otherwise indicate that $f$ and $g$ 
	  are co-prime with respect to $\varepsilon$ and return.
   \State Solve for $w$ from
	  \[
	  wV = \begin{pmatrix} *_0 & *_1 & \hdots & *_D & 0 & \hdots & 0
	  \end{pmatrix} ^T.
	  \]
   \State Set $G = wV$ 
   \State Optionally remove the content from $G$ numerically.
 \end{algorithmic}
\end{algorithm}

\begin{algorithm}[!h]
  \caption{\bf : Nearest With GCRD}
  \label{alg:SVD-GCRD}

 \begin{algorithmic}[1]
   \smallskip
   \Require 
   \item[$\bullet$] $f,g \in \RR[t][\D;']$ with $\norm{f}=\norm{g}=1$;
   \item[$\bullet$] A search radius $\epsilon>0$.
   \Ensure $f+\Delta f$, $g+\Delta g$  where $\deg_t \Delta f\leq \deg_t f$, $\degD \Delta f \leq  \degD f$, $\deg_t \Delta g \leq \deg_t g$, $\degD 
   \Delta g \leq \degD g$ and \newline
   $G \approx \gcrd(f + \Delta f,g + \Delta g) \in
	    \RR[t][\D;']$ with $\degD G\geq 1$, \linebreak
  or an indication that $f$ and $g$ are co-prime within search
   radius $\epsilon$.

   \medskip
   \State $m\gets \degD f$, $n\gets \degD g$, $d\gets \max\{\deg_t
   f,\deg_t g\}$ and $\mu \gets 2(m+n)d$.
   \State Form the differential Sylvester matrix\newline
	  $V(f,g) \in \RR[t]^{(m+n)\times(m+n)}$.
   \State Form  the inflated differential Sylvester matrix \newline
         $\Vhat = \Vhat(f,g) \in \RR^{ (m+n)(\mu+1) \times (m+n)(\mu+d+1)}$ of $V$.
   \State Compute the SVD of $\Vhat$,  $\Vhat = P\Sigma Q$ with $P,\Sigma$ and $Q$
	  as discussed in \S\ref{sec:lsquare}.
   \State Compute  the numerical rank $r$ of $V$ using Algorithm~\ref{alg:RankAlg}
	  on $\Vhat$ with search radius $\epsilon$. 
  \State  If $r>0$ set the last
	  $r(\mu+d+1)$ singular values to $0$ and compute $\Sigmabar$
          as discussed in  \S\ref{sec:lsquare}.
	  Otherwise indicate that $f$ and $g$ are co-prime with
          respect to $\epsilon$ \hbox to 0pt {and return.}
  \State Compute  $\Vhat + \Delta \Vhat = P \Sigmabar Q$.
  \State Recover $f+\Delta f$ and $g+\Delta g$ from $\Vhat + \Delta \Vhat$
	 as discussed in \S\ref{ssec:recovery}.
  \State Compute $G = \mathtt{NumericGCRD}(f+\Delta f,g + \Delta g)$ using Algorithm \ref{alg:NumericGCRD}
	 with $\epsilon$ used to validate the degree of our approximate GCRD. 
  \State Optionally remove the content from $G$ numerically.

 \end{algorithmic}

\end{algorithm}
As an observation, for  $f,g \in \RR[t][\D;']$ of 
degrees $m$ and in $n$ in $\D$ respectively, we can use
some heuristics to detect failures of our algorithms. 
Let $G = \gcrd(f,g)$ and $\degD G = D$.
It is clear  that if $D > \min \{n,m\}$
 then $G$ cannot be a GCRD. Simmilarily 
 if $\deg_t(G_i) < \deg_t(G_D)$ for some $
  0\leq i < D$ then $G$ cannot be a GCRD of $f$ and
  $g$ over $\RR[t]$ if 
  $\gcd( \lcoeff(f), \lcoeff(g)) = 1.$

\begin{algorithm}[!h]
  \caption{\bf : Deflated Rank}
  \label{alg:RankAlg}

 \begin{algorithmic}[1]
   \smallskip
   \Require 
   \item[$\bullet$] An inflated differential Sylvester
     matrix\newline
         $\Vhat \in \RR^{(m+n)(\mu+1) \times
       (m+n)(\mu+d+1)}$ with $m,n,d$ and $\mu$ defined as in
	     Algorithm~\ref{alg:NumericGCRD} or
             Algorithm~\ref{alg:SVD-GCRD}.
   \item[$\bullet$] 
	     A search radius $\epsilon >0$ 

  \smallskip
   \Ensure The the numeric rank of the (non-inflated) differential
   Sylvester matrix $V$ from Algorithm~\ref{alg:NumericGCRD}  or \ref{alg:SVD-GCRD}. 

   \medskip
   \State Find the maximum $k$ such that $\sigma_k > \epsilon
	  \frac{\sqrt{ (m+n)(2\mu+d+2) }}{\mu+d+1}$ and
	  $\sigma_{k+1}<\epsilon$.
   \State if $\sigma_k > \epsilon$ for
	  all $k$ then $\Vhat$ has full rank.
   \State If there is no significant change between $\sigma_k$ and $\sigma_{k+1}$ for all $k$
	  as determined by step 2 then return failure. 
   \State Set $ r =\left \lceil \frac{k}{\mu+d+1} \right \rceil$, the scaled rank.
 \end{algorithmic}
\end{algorithm}

\begin{example}[Nearest With GCRD]\label{ex:svd-gcrd}
  Consider
  \[
  f=\left(  1.0+ 0.0043\,t \right) {{\it \D}}^{2}+ \left(  3.0\,t- 0.0003
    \right) {\it \D}+ 2.0\,{t}^{2}+ 1.0
  \]
  and
  \[
  g = {t}^{2}{{\it \D}}^{2}+ \left( - 0.0004\,t+{t}^{3}+ 0.0001 \right) {
    \it \D}+{t}^{2}
  \]
  with a given search radius $\epsilon = .5 \cdot 10^{-2}.$ Applying 
  Algorithm ~\ref{alg:RankAlg} on the singular values of $\Vhat$ we obtain 
  a rank of $78$ where the expected rank is $84$, which is within reason.
  From this we conclude that the degree of our GCRD is $1$ and we compute
 \begin{align*}
   \ftil = & 0.99999+ 2.00000\,{t}^{2} \\
   + & \left( - 0.00029+  3.00000\,t \right) {\it \D} \\
     + &  \left( 0.99999+ 0.00429\,t \right)
   {{\it \D}}^{2}
 \end{align*}
 and
\begin{align*}
  \gtil = & - 0.00001- 0.00002\,t+ 1.00001\,{t}^{2} \\ + & \left(
    0.00011- 0.00039\,t+ 0.00005\,{t}^{2}+ 0.99999\,{t}^{3} \right)
  {\it \D} \\ +& \left( - 0.00001- 0.00002\,t+ 0.99999\,{t}^{2}
  \right) {{\it \D}}^{2}.
\end{align*}

Furthermore, we obtain $\mathtt{NumericGCRD}(\hat f, \hat g) \approx
 -0.06695+1.06508 t+\D$, after removing content.  We have that the size of the
perturbations are $\| f-\ftil \| = 1.22023 \times 10^{-9}$ and $\| g- \gtil \| =0.00007$.
In this example the largest singular value we removed was known to three decimal places so we can expect
the accuracy of our answer to reflect this. 
\end{example}
As a remark, we constructed $\ftil$ and $\gtil$ from their first occurrence in $\Vhat$ for illustration purposes.
In general weighted approaches work well and are demonstrated in Section \ref{sec:exp-data}.

\subsubsection*{On the Rank Algorithm}
Algorithm \ref{alg:RankAlg} follows similarly to the method for
determination of numerical rank in the SVD GCD of \cite{CGTW95}.  In
our circumstances the singular values are in (column) ``blocks'' of size $\mu+d+1$.
We choose to declare the scaled rank as
$ \lceil  \frac{k}{\mu+d+1}\rceil$ because it will tend to ignore spurious singular
values provided there are  fewer spurious singular values than the block size. Since we know
that non-spurious singular values come in the block size, this allows us to accurately compute the rank of 
$V$ even if we don't correctly compute the rank of $\Vhat$ in the case of low degree
differential polynomials.

Another important reason for this decision is if we set a singular value to $0$
that is reasonably far away from the next singular value then the resulting matrix $\Vhat + \Delta \Vhat$
becomes highly unstructured and our algorithm fails to produce meaningful results.



It is possible that the singular values are not clearly
separated. This happens frequently with dense large degree
differential polynomials. In this scenario it is possible to under
estimate the rank of $V$.  If $f$ and $g$ are known to have a GCRD,
then it may be possible to eliminate bad GCRD candidates using heuristics.


 \section{Complexity \& Stability Analysis}

In this section we investigate the computational complexity in terms
of the input size and some of the numerical stability of the
algorithms. We assume as usual that for $f,g \in \RR[t][\D;']$
that $\degD f = m$, $\degD g = n$, $V= V(f,g)$\linebreak $\in \RR[t]^{
  (m+n) \times (m+n)}$ and $\Vhat \in \RR ^{(m+n)( \mu+1) \times
  (m+n)(\mu +d+1)}$ with $\deg V = d$. We assume that all operations
over $\RR$ can be performed in a constant amount of time, i.e., are
floating point operations or \emph{flops}.  Our
algorithms will be analyzed in a bottom-up approach, reflecting their
dependencies.

\subsubsection*{Analysis of Algorithm  \ref{alg:RankAlg}: {\large
    \texttt{Deflated Rank}}}
    
The cost of the algorithm is dependent on computing the singular
values of $\Vhat$. The cost of the computing the singular values of
$\Vhat$ is $O( (m+n)^6d^3)$ operations over $\RR$ or flops, using the
standard method of matrix multiplication.

\subsubsection*{Analysis of Algorithm \ref{alg:NumericGCRD}: 
  {\large\texttt{NumericGCRD}}}

The cost of the algorithm is dominated by that of computing the singular
values of $\Vhat$ to determine the rank of $V$.
The cost of performing the linear algebra on $V$ is 
$O(m+n)^3 d^2)$ operations over $\RR$.   The cost of the SVD
on $\Vhat$ is $O( (m+n)^6d^3)$ operations and hence dominates.

In general the numerical stability of the algorithms depends on the
methods used to solve the problem and the conditioning of $V$. It is
difficult to say much else without making further assumptions.  We
hope to explore this further in subsequent work.

\subsubsection*{Analysis of Algorithm  \ref{alg:SVD-GCRD}: {\large
    \texttt{Nearest With GCRD}}}

The cost of the algorithm is dominated by that of computing the singular
values of $\Vhat$ to determine the rank of $V$. This
requires $O(((m+n)^6d^3)$ operations over
$\RR$ or flops, with the usual method of matrix multiplication. This algorithm
will call Algorithm ~\ref{alg:NumericGCRD}, but again the cost
of computing the singular values of $\Vhat$ is the dominating cost.

The accuracy of our answer is subject to the singular values we set to
zero or remove.  In particular, the larger the singular values we
remove, the less accurate our answer typically becomes. In our experiments this
became noticeable with singular values around $10^{-3}$.  Under the
assumption that noise is distributed uniformly over $f$ and $g$ then
the singular values of $\Vhat$ will generally remain small, which means
Algorithm~\ref{alg:SVD-GCRD} typically produce a reasonable numeric
GCRD.  Again, we hope to bolster these heuristic observations with
more careful analysis in subsequent work.


 \section{Experimental Evaluation}
\label{sec:exp-data}

In order to verify the robustness of the algorithm and whether it is
able to compute $\ftil$ and $\gtil$ reliably, we performed 100
random trials with different search neighborhoods and adding random
amounts of noise.  In our experiments we are adding a noise factor of
size $\delta$ and working with a search radius of size $\rho$.
The goal of our experiments is to demonstrate that given a pair
of relatively prime polynomials $\fhat$, $\ghat \in \RR[t][\D; ']$ where 
$\norm{f-\fhat} = \norm{g-\ghat} = \delta$ for some $f$ and 
$g$ such that $\gcrd(f,g)$ is non trivial,
we can recover a pair of polynomials such that Algorithm~\ref{alg:SVD-GCRD}
returns a non trivial answer in a given search radius
of size $\rho$.  More precisely, the perturbations
$\norm{\fhat - \ftil}$ and  $\norm{\ghat - \gtil}$ are
minimal approximations to the 2-norm we are using in our
least squares setting. 
We can think of $\varepsilon$ from Problem~\ref{prb:main-prob}
as being quantified by $\max\{ \norm{\fhat - \ftil} ,\norm{\ghat - \gtil} \}$
in these tests.

We perform tests on two sets of examples where the input size was
bounded.  We found the algorithm worked quite well in practice on
examples taken uniformly at random, however the lack normalization or
structure made it difficult to obtain comparable data. In the data
tables we provide statistics for two different reconstructions.  The
first approach is reconstructing $\ftil$ and $\gtil$ from the first
row they appear in $\Vhat+\Delta \Vhat$. The other reconstruction
approach is using the weighted approach over the entire block to
recover $\ftil_w$ and $\gtil_w$ respectively. On average the weighted
approach tends to smooth values over at the cost of structure where as
taking the first row can preserve the underlying structure of $f$ and
$g$, especially if they are sparse.

We note that as the noise decreases so does the size of the
perturbation. We will continue to get smaller perturbations until the
roundoff error from writing $\Vhat \approx P\Sigma Q$ dominates the
perturbation sizes.  If the noise is sufficiently small then we are
executing \texttt{NumericGCRD} (Algorithm~\ref{alg:NumericGCRD}) as
the perturbations will become indistinguishable from roundoff errors.
Despite the noise in some tests 
appearing large, it is distributed uniformly into the coefficients resulting
in each coefficient being perturbed by a fraction of the total amount of
noise added. 
Although the worst case perturbations in $\ftil$ and $\gtil$ were 
relatively large, such perturbations were uncommon in our experiments
and we still managed to obtain
valid candidates.  in the context of Problem~\ref{prb:main-prob}.
We justify adding noise uniformly since in practice
we would expect this work to be applied on data which suffers from
round off errors which tend to be uniformly distributed across the
data. 

\subsection{Bounded Coefficient Tests}
In this section we perform our tests on $f$ and $g$ whose coefficients
are bounded and somewhat structured.  The following steps detail the
construction of our examples.
\begin{enumerate}
\item Generate $h_1,h_2,h_3 \in \RR[t][\D;']$ where $\deg_t h_i \leq
  2$ and $ 1 \leq \degD h_i \leq 2$.
\item Set $h_{ij} = h_{ij}/\|h_{ij} \|$ for $i=1,2,3$ and $1\leq j
  \leq \deg h_i$.  More precisely, $\|h_i\| = \deg_t h_i +1$.
\item Compute $f= h_1 \cdot h_3 + \delta_f$ and $g=h_2\cdot h_3 +
  \delta_g$, where the noise is distributed uniformly.
\item Run Algorithm~\ref{alg:SVD-GCRD}, the \texttt{Nearest With GCRD} algorithm on $f$
  and~$g$.
\item If \texttt{Nearest With GCRD}$(f,g) =1$ then ignore the result in the statistics.
\end{enumerate}

The construction of these examples seems peculiar and
counterintuitive, but it provides a suitable set of tests.  The
justification for the set of tests is to get an idea how the algorithm
performs on random data that is not normalized, but is bounded with a
random structure.

The table in Figure~\ref{fig:perturbations-bounded} details the
relevant statistics obtained from running
Algorithm~\ref{alg:SVD-GCRD}.  The table in Figure~\ref{fig:failures-bounded}
provides the number of trivial GCRDs that occurred for a given
$\rho$ and $\delta$.
 
\begin{figure}[h]

  \caption{Perturbation statistics for bounded coefficients}
  \center  \label{fig:perturbations-bounded}
  \begin{tabular}{|c|c|c|c|c|} \hline
    $(\rho,\delta, \text{ {\scriptsize Reconstructed}})$  & Max & Average & {\scriptsize Standard Deviation}\\ \hline
    $(.5,.5,\ftil)$ 		& 0.269443 & 0.022653 & 0.040056  \\
    $(.5,.5,\gtil)$ 		& 0.233732 & 0.025051 & 0.037960 \\
    $(.5, .5, \ftil_w)$ 	& 0.269443 & 0.023507 & 0.038163  \\
    $(.5, .5, \gtil_w)$		& 0.233732 & 0.023625 & 0.032776  \\  \hline
    $(.5, .1,\ftil)$ 		& 0.047198 & 0.010184 & 0.011120 \\
    $(.5, .1,\gtil)$ 		& 0.061742 & 0.011490 & 0.013434  \\
    $(.5, .1, \ftil_w)$	 	& 0.045457 & 0.009629 & 0.010101  \\
    $(.5, .1, \gtil_w)$		& 0.049726 & 0.010082 & 0.010634  \\  \hline
    $(.5, .01,\ftil)$ 		& 0.233401  & 0.006653 & 0.030810 \\
    $(.5, .01,\gtil)$ 		& 0.166854 & 0.005394 & 0.021614  \\
    $(.5, .01, \ftil_w)$ 	& 0.233401 & 0.005647 & 0.027110  \\
    $(.5, .01, \gtil_w)$	& 0.166854 & 0.005647 & 0.018869  \\  \hline
     $(.5, .001,\ftil)$ 		& 0.283611  & 0.006625 & 0.037541 \\ 
     $(.5, .001,\gtil)$ 		& 0.182687 & 0.004582 & 0.025346  \\
     $(.5, .001, \ftil_w)$	 	& 0.283611 & 0.006277 & 0.037003  \\
     $(.5, .001, \gtil_w)$		& 0.182687 & 0.004317 & 0.024933  \\  \hline
    \end{tabular}
\end{figure}

 \begin{figure}[h]

 \caption{Trivial Numeric GCRDs for Bounded Coefficients}
 \center   \label{fig:failures-bounded}
    \begin{tabular}{|c|c|c|c|} \hline 
    $\rho$  & $\delta$&  Trivial GCRD \\ \hline
    .5 		& .5	    	& 9 \\
    .5 		& .1	    	& 3  \\
    .5 		& .01	    	& 3  \\
    .5 		& .001	    	& 0  \\\hline
    .05	 	& .5	    	& 95  \\
    .05		& .1	    	& 42  \\
    .05		& .01	    	& 0  \\
    .05		& .001	    	& 2  \\\hline
    \end{tabular}  
 \end{figure}
\subsection{Normalized Tests}
In this section we perform our tests on $f$ and~$g$ that were
normalized and added noise.  The following steps detail the
construction of our examples.
\begin{enumerate}
\item Generate $h_1,h_2,h_3 \in \RR[t][\D;']$ where $\deg_t h_i \leq
  2$ and $ 1 \leq \degD h_i \leq 2$.
\item Compute $f= \frac{h_1 \cdot h_3}{\|h_1 \cdot h_3 \|} + \delta_f$
  and $g=\frac{h_2 \cdot h_3}{\|h_2 \cdot h_3 \|} + \delta_g$, where
  the noise is distributed uniformly.
 \item Run Algorithm~\ref{alg:SVD-GCRD}, the \texttt{Nearest With GCRD} algorithm on $f$ and~$g$.
 \item If $\texttt{Nearest With GCRD}(f,g) =1$ then ignore the result in the statistics.
\end{enumerate}

The table in Figure~\ref{fig:perturbations-normalized} details the
relevant statistics obtained from running
Algorithm~\ref{alg:SVD-GCRD}.  The table in Figure~\ref{fig:failures-normalized} provides
the number of trivial GCRDs that
occurred for a given $\rho$ and $\delta$.
 
\begin{figure}[!h]
  \caption{Perturbation statistics for normalized $f$ and $g$}
  \center   \label{fig:perturbations-normalized}
  \begin{tabular}{|c|c|c|c|c|} \hline
    $(\rho,\delta, \text{ {\scriptsize Reconstructed}})$  & Max & Average & {\scriptsize Standard Deviation} \\ \hline
    $(.5,.5,\ftil)$ 		& 0.108768 & 0.015880 & 0.022180 \\ 
    $(.5,.5,\gtil)$ 		& 0.158056 & 0.016359 & 0.024371 \\
    $(.5, .5, \ftil_w)$ 	& 0.052060 & 0.010584 & 0.010404  \\
    $(.5, .5, \gtil_w)$		& 0.097296 & 0.011577 & 0.014039  \\  \hline
    $(.5, .1,\ftil)$ 		& 0.058105  & 0.017043 & 0.014834 \\
    $(.5, .1,\gtil)$ 		& 0.063637 & 0.025312 & 0.017561  \\
    $(.5, .1, \ftil_w)$	 	& 0.057251 & 0.012524 & 0.013178  \\
    $(.5, .1, \gtil_w)$		& 0.046134 & 0.019487 & 0.012614  \\  \hline
    $(.5, .01,\ftil)$ 		& 0.053014  & 0.005987 & 0.009114 \\
    $(.5, .01,\gtil)$ 		& 0.057982 & 0.006851 & 0.009757  \\
    $(.5, .01, \ftil_w)$	& 0.031045 & 0.004263 & 0.006710  \\
    $(.5, .01, \gtil_w)$	& 0.038133 & 0.005996 & 0.007523  \\  \hline
    $(.5, .001,\ftil)$	 	& 0.066271  & 0.003778 & 0.009018 \\
    $(.5, .001,\gtil)$ 		& 0.093175 & 0.004416 & 0.011236  \\
    $(.5, .001, \ftil_w)$	& 0.067082 & 0.003556 & 0.009915  \\
    $(.5, .001, \gtil_w)$	& 0.038098 & 0.003213 & 0.006429  \\  \hline
  \end{tabular}
\end{figure}

\begin{figure}[!h]
  \caption{Trivial Numeric GCRDs for Normalized $f$ and
    $g$}
  \label{fig:failures-normalized}
 
  \center 
  \begin{tabular}{|c|c|c|} \hline 
    $\rho$  & $\delta$& Trivial GCRD \\ \hline
    .5 		& .5	   	& 6 \\
    .5 		& .1	    	& 0\\
    .5 		& .01	    	& 2\\
    .5 		& .001	    	& 1\\\hline
    .05	 	& .5	    	& 86\\
    .05		& .1	    	& 22\\
    .05		& .01	    	& 2 \\
    .05		& .001	    	& 1\\\hline
    \end{tabular}  
 \end{figure}


 \pagebreak	
 \section{Conclusions and Future Work}

We have developed a framework for approximate differential polynomials
and demonstrated an algorithm for computing the greatest common right
divisor of two approximate differential polynomials.  This corresponds
to finding a representation of the common solutions of two approximate
linear differential operators.

Our algorithm makes use of the (unstructured) SVD approach introduced
in \cite{CGTW95}, and works well when differential polynomials with a
non-trivial GCRD are nearby.  Unfortunately, it also suffers some of
the same drawbacks, especially when the nearest polynomials with a
non-trivial GCRD are relatively far away.  Even more so, the lack of a
geometric root space (as for conventional polynomials) makes the
analysis substantially more difficult.  One possible remedy we are
exploring is to this is to use a structured matrix approach, as explored
in \cite{Beckerman97}, \cite{BGM05}, \cite{KalYan06}.  In particular,
Riemannian SVD-like methods would seem relatively easy to adapt in
this case (though again, analysis will not be easy).

Another problem is the factorial-like scaling introduced by multiple
differentiations when constructing the (inflated) differential
Sylvester matrix.  This leads to numerical instability when the degree
in $\D$ gets even modestly large.  Some row scaling may alleviate this
somewhat, but an alternative differential resultant formulation would
seem a better approach overall, and is a path we are investigating.

The algorithms described in this paper have been implemented in Maple
and are primarily based on the LinearAlgebra package.  This allows
flexibility in determining the method used to solve the linear systems
and dealing with other problems such as content removal and numerical
stability.

In general the approximate GCRD algorithm performs quite well for low
degree GCRDs and low noise. If the degree of the GCRD increases
relative to the degrees of $f$ and $g$ then $\Vhat + \Delta \Vhat$
becomes more unstructured and the perturbations become larger which
leads to unsuitable numeric GCRDs.  In the case of degree one or two
GCRDs we were still able to reconstruct meaningful answers even if the
noise was outside of our search radius because the matrix still
retained a Sylvester-like structure.

Despite the speculative nature of our algorithm we find that it works
reasonably well on the test data when we have a priori bound on the
noise and choose a search radius accordingly.  Approximating GCRDs of
higher order becomes more difficult with the unstructured approach as
the perturbations in the data will become
larger. If other assumptions are made about the structure of $f$ and $g$ and
the distribution of noise then it would be possible to obtain a better
approximate GCRD by exploiting this underlying structure.  



 \bibliographystyle{plainnat}

 \bibliography{approxore}  
\end{document}